\documentclass[submission]{eptcs}
\usepackage{breakurl}             
\usepackage{underscore}           
\usepackage[utf8]{inputenc}

\usepackage {amsfonts}
\usepackage{leftidx}
\usepackage {amsthm}
\usepackage {cite}
\usepackage {amsmath}
\usepackage{mathtools}
\newtheorem{claim}{Claim}
\newtheorem{theorem}{Theorem}
\newtheorem{lemma}{Lemma}
\newtheorem{basicIdea}{Basic Idea}
\newtheorem{proofbi}{Proof (Basic Idea)}

\newtheorem{example}{Example}

\usepackage {amsmath}
\usepackage {amssymb}
\usepackage{mathtools}

\usepackage{xcolor}
\usepackage[ruled,noline]{algorithm2e}

\SetKwBlock{Begin}{}{}
    \SetKwComment{Comment}{/*   }{   */}
    	\SetKwInOut{Input}{Input}
    \SetKwInOut{Output}{Output}

\title{CD Grammar Systems with Two Propagating Scattered Context Components Characterize the Family of Context Sensitive Languages}
\author{Alexander Meduna
\institute{Department of Information Systems, Faculty of Information Technology,\\
Brno University of Technology\\
Božetěchova 2, Brno 612 66, Czech republic}
\email{meduna@fit.vutbr.cz}
\and
Jakub Marti\v{s}ko
\institute{Department of Information Systems, Faculty of Information Technology,\\
Brno University of Technology\\
Bo\v{z}et\v{e}chova 2, Brno 612 66, Czech republic}
\email{\quad imartisko@fit.vutbr.cz}
}

\begin{document}
\maketitle

\newcommand{\marked}[5]{\prescript{#2}{#3}#1^{#4}_{#5}}

\newcommand{\mpone}{\triangle}
\newcommand{\mptwo}{\triangledown}
\newcommand{\mpthr}{\diamond}

\newcommand{\cf}[1]{\marked{#1}{}{|}{}{|}}
\newcommand{\cfn}[2]{\marked{#1}{}{|}{}{#2|}}

\newcommand{\csln}[2]{\marked{#1}{}{>}{}{#2|}}
\newcommand{\csrn}[2]{\marked{#1}{}{|}{}{#2<}}

\newcommand{\lcs}[1]{\marked{#1}{}{>}{}{|}}
\newcommand{\rcs}[1]{\marked{#1}{}{|}{}{<}}
\newcommand{\lcsc}[1]{\marked{#1}{}{>}{\wedge{}}{|}}
\newcommand{\rcsc}[1]{\marked{#1}{}{}{\wedge{}}{<}}
\newcommand{\cfc}[1]{\marked{#1}{}{}{\wedge}{|}}
\newcommand{\cfcn}[2]{\marked{#1}{}{}{\wedge}{#2|}}

\newcommand{\cur}[1]{\marked{#1}{}{}{\wedge}{}}
\newcommand{\pc}[1]{\marked{#1}{\mpone}{}{\wedge}{}}

\newcommand{\prcsc}[1]{\marked{#1}{\mpone}{}{\wedge}{<}}
\newcommand{\mrcsc}[1]{\marked{#1}{\mptwo}{}{\wedge}{<}}
\newcommand{\mrcs}[1]{\marked{#1}{\mptwo}{}{}{<}}
\newcommand{\mcfc}[1]{\marked{#1}{\mptwo}{}{\wedge}{<}}
\newcommand{\pcfc}[1]{\marked{#1}{\mpone}{}{\wedge}{|}}
\newcommand{\pcf}[1]{\marked{#1}{\mpone}{}{}{|}}
\newcommand{\mcf}[1]{\marked{#1}{\mptwo}{}{}{|}}

\newcommand{\fplus}[1]{\marked{#1}{\mpone}{}{}{}}
\newcommand{\fminus}[1]{\marked{#1}{\mptwo}{}{}{}}
\newcommand{\fstar}[1]{\marked{#1}{\mpthr}{}{}{}}

\newcommand{\trm}[1]{\marked{#1}{}{}{'}{}}
\newcommand{\cftrm}[1]{\marked{#1}{}{|}{'}{|}}
\newcommand{\mcftrm}[1]{\marked{#1}{\mptwo}{|}{'}{|}}
\newcommand{\mcftrmf}[1]{\marked{#1}{\mptwo}{}{'}{|}}
\newcommand{\strm}[1]{\marked{#1}{\mpthr}{}{'}{}}

\newcommand{\aorcs}[1]{\marked{#1}{1}{|}{}{<}}
\newcommand{\aolcs}[1]{\marked{#1}{1}{>}{}{|}}
\newcommand{\atrcs}[1]{\marked{#1}{2}{|}{}{<}}
\newcommand{\atlcs}[1]{\marked{#1}{2}{>}{}{|}}
\newcommand{\aux}[1]{\marked{#1}{\sim}{|}{}{|}}

\begin{abstract}
The $\mathcal{L}(PSCG)=\mathcal{L}(CS)$ problem asks whether propagating scattered context grammars and context sensitive grammars are equivalent. The presented paper reformulates and answers this problem in terms of CD grammar systems. More specifically, it characterizes the family of context sensitive languages by two-component CD grammar systems with propagating scattered context rules.
\end{abstract}

\section{Introduction}
Are propagating scattered context grammars as powerful as context sensitive grammars? This question customarily referred to as the $\mathcal{L}(PSCG)=\mathcal{L}(CS)$ (see \cite{greibach}) problem, represents a long standing open problem in formal language theory. The present paper reformulates and answers this question in terms of CD grammar systems.

More precisely, the paper introduces CD grammar systems whose components are propagating scattered context grammars. Then, it demonstrates that two-component grammar systems of this kind generate the family of context-sensitive languages, thus the answer to this problem is in affirmation if the problem is reformulated in the above way.

\section{Preliminaries}
We assume that the reader is familiar with formal language theory (see \cite{meduna2000,dassow1989regulated,handbookVol1,handbookVol2} for details). For an alphabet (finite nonempty set) $V$, $V^*$ represents the free monoid generated by $V$ under the operation of concatenation. The unit of $V^*$ is denoted by $\varepsilon$. The length of string $x_1\dots x_n \in V^*$ is denoted as $|x_1\dots x_n|$ and is equal to $n$. Similarly by $|x_1\dots x_n|_N$, the length of string when counting only symbols of $N$ is denoted. The function $alph(\alpha), \alpha \in V^*$ is defined as $alph(\alpha)=\{x:\alpha = \beta x \gamma; x \in V, \beta,\gamma \in V^*\}$.

A \emph{scattered context grammar (SCG)} is a quadruple $G=(N,T,P,S)$, where $N$ and $T$ are alphabets of nonterminal and terminal symbols respectively, where $N\cap T=\emptyset$, further let $V=N\cup T$. $S\in N$ is starting symbol. $P$ is a nonempty finite set of rules of the form $(A_{1},\dots,A_{n})\rightarrow(\alpha_{1},\dots\alpha_{n})$, where $A_{i}\in N$, $\alpha_{i}\in V^{*}, 1\leq i \leq n$, for some $n\geq 1$. Let $u,v \in V^{*}$, where $u=u_{1}A_{1}u_{2}A_{2}u_{3}\dots u_{n}A_{n}u_{n+1}$ and $v=u_{1}\alpha_{1}u_{2}\alpha_{2}u_{3}\dots u_{n}\alpha_{n}u_{n+1}$, $(A_{1},A_{2},A_{3},\dots,A_{n})\rightarrow (\alpha_{1},\alpha_{2},\alpha_{3},\dots,\alpha_{n})\in P$, where $u_i\in V^*$ for all $1\leq i\leq n+1$; then $u\Rightarrow v$ in $G$.

The language generated by SCG $G$ is defined as $L(G)=\{x:S\Rightarrow^{*}x, x\in T^{*}\}$, where
$\Rightarrow^{*}$ and $\Rightarrow^{+}$ denote the transitive-reflexive closure  and the transitive closure  of $\Rightarrow$, respectively. A SCG grammar is said to be \emph{propagating (PSCG)} iff each $(A_{1},\dots,A_{n})\rightarrow(\alpha_{1},\dots\alpha_{n})\in P$ satisfies $\alpha_{i}\neq \varepsilon, 1\leq i\leq n$. $\mathcal{L}(SCG)$ and $\mathcal{L}(PSCG)$ denote the families of languages generated by SCGs and PSCGs respectively.

A \emph{context-sensitive grammar (CSG)} is a quadruple $G=(N,T,P,S)$, where $N$ and $T$ are the alphabets of nonterminal and terminal symbols respectively, where $N\cap T=\emptyset$. Set $V=N\cup T$. $S\in N$ is the starting symbol. $P$ is a nonempty finite set of rules of the form $\alpha A\beta\rightarrow \alpha \gamma\beta$, where $A\in N$, $\alpha,\beta,\gamma \in V^{*}$ and $\gamma\neq\varepsilon$. Let $u,v \in V^{*}$, $u=u_{1}\alpha A\beta u_{2}$, $v=u_{1}\alpha \gamma\beta u_{2}$, $\alpha A\beta\rightarrow \alpha \gamma\beta \in P$	where $u_{1},u_{2},\alpha,\beta,\gamma\in V^{*}, A\in N$, $\gamma\neq\varepsilon$, then $u\Rightarrow v$ in $G$.

The language generated by CSG $G$ is defined as $L(G)=\{x:S\Rightarrow^{*}x, x\in T^{*}\}$, where
$\Rightarrow^{*}$ and $\Rightarrow^{+}$ denote the transitive-reflexive closure  and the transitive closure  of $\Rightarrow$ respectively. By $\mathcal{L}(CS)$ the family of languages generated by CSGs is denoted.

A grammar $G=(N,T,P,S)$ is in \emph{Kuroda Normal form} if every rule in $P$ has one of the following forms:
 \begin{enumerate}
  \item{$AB\rightarrow CD$}
  \item{$A\rightarrow CD$}
  \item{$A\rightarrow C$}
  \item{$A\rightarrow a$}
\end{enumerate}
where $A,B,C,D\in N$ and $a\in T$. Recall  that every CSG can be transformed into an equivalent grammar in Kuroda normal form (see \cite{meduna2000,KURODA1964207}). Without any loss of generality, for any CSG we assume their equivalent in the Kuroda normal form in what follows.

To emphasize that rule $p$ was used during the derivation step, we will sometimes write $\alpha \Rightarrow \beta [p]$.

A \emph{cooperating distributed grammar system (CDGS)} (see \cite{csuhaj1990cooperating,handbookVol1,csuhaj1994grammar}) of degree $n$ is $n$+3 tuple $G=(N,T,S,P_{1},P_{2},\dots,P_{n})$, where $N$ and $T$ are alphabets of nonterminal and terminal symbols respectively, where $N\cap T=\emptyset$, further let $V=N\cup T$. $S\in N$ is starting symbol. $P_{i}, 1\leq i \leq n$ are nonempty finite sets (called components) of rewriting rules over $V$. For a CDGS $G=(N,T,S,P_{1},P_{2},\dots,P_{n})$, the terminating ($t$) derivation by the $i$-th component, denoted as ${\Rightarrow^{t}_{P_{i}}}$ is defined as $u{\Rightarrow^{t}_{P_{i}}}v$ iff ${u\Rightarrow^{*}_{P_{i}}}v$ and there is no $z\in V^{*}$ such that $v\Rightarrow_{P_{i}} z$. The language generated by CDGS $G=(N,T,S,P_{1},P_{2},\dots,P_{n})$ working in $t$ mode is defined as
 $L(G)=\{x:S{\Rightarrow^{t}_{P_{i_{1}}}}x_{1}$ ${\Rightarrow^{t}_{P_{i_{2}}}}x_{2}$ $ \dots $ ${\Rightarrow^{t}_{P_{i_{m}}}}x, m\geq 1, 1\leq i_{j}\leq n, 1\leq j\leq m, x\in T^{*}\}$.

In this paper, CDGS with propagating scattered context rules (SCGS) and CDGS with context-sensitive rules will be considered.

\section{Main results}
In this section, the identity of $\mathcal{L}(SCGS)$ and $\mathcal{L}(CS)$ will be demonstrated.

\begin{lemma}
\label{lemma:scgssubsetcs}
$\mathcal{L}(SCGS) \subseteq \mathcal{L}(CS)$
\end{lemma}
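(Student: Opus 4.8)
The plan is to exploit the fact that propagating scattered context rules are length non-decreasing, so that every derivation stays within a workspace linear in the length of the generated word, and then to invoke the characterization of $\mathcal{L}(CS)$ by linear bounded automata (LBA).

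First I would observe that if $(A_1,\dots,A_k)\rightarrow(\alpha_1,\dots,\alpha_k)$ is a propagating scattered context rule, then each $\alpha_i\neq\varepsilon$, so applying it replaces the $k$ single nonterminals $A_1,\dots,A_k$ by strings of length at least one; hence every derivation step $u\Rightarrow v$ in any component satisfies $|u|\le|v|$. Since $|S|=1$, this immediately gives $\varepsilon\notin L(G)$ and, more importantly, that in any derivation $S\Rightarrow^* w$ with $w\in T^*$ every intermediate sentential form $z$ satisfies $|z|\le|w|$.

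Next I would construct a nondeterministic LBA $M$ accepting $L(G)$. On input $w$, $M$ keeps a current sentential form on its linearly bounded tape, starting from $S$; by the length bound above the tape never needs more than $|w|+1$ cells. $M$ then simulates the $t$-mode derivation: it guesses the index $i$ of the component to be applied, repeatedly and nondeterministically applies rules of $P_i$ to the tape contents, and---this is the delicate point---before it may switch to another component it must verify that $P_i$ has been driven to termination, i.e.\ that no rule of $P_i$ is applicable to the current form. Here I would note that a scattered context rule is applicable to a string $x$ iff the nonterminals on its left-hand side occur in $x$ as a scattered subsequence in the prescribed order, which is a predicate $M$ can decide by a single left-to-right scan using finitely many states; hence the $t$-mode condition is checkable within the linear space bound. $M$ accepts iff the simulated derivation reaches exactly $w$.

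The correctness argument then splits into the two usual inclusions between $L(M)$ and $L(G)$: every accepting run of $M$ transcribes a legal $t$-mode derivation of $G$, and conversely every $t$-mode derivation of a word $w$ can be replayed by $M$ within space $|w|$. Invoking the equality between the family of languages accepted by linear bounded automata and $\mathcal{L}(CS)$ then yields $L(G)\in\mathcal{L}(CS)$, and since $G$ was arbitrary, $\mathcal{L}(SCGS)\subseteq\mathcal{L}(CS)$. The main obstacle I anticipate is the faithful treatment of the $t$-mode control rather than the length bookkeeping: one must argue carefully that the applicability test for a whole component is decidable in the available space and that the nondeterministic replaying neither skips the ``apply until blocked'' requirement nor halts a component prematurely. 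An alternative, purely grammatical route would replace $M$ by a direct construction of a monotone (length non-decreasing) phrase-structure grammar encoding the component index and the scattered rewriting through marker symbols travelling across the sentential form; this avoids automata but forces every auxiliary step to remain non-contracting, which I expect to be more error-prone than the LBA simulation.
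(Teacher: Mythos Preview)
Your argument is correct: the length non-decreasing property of propagating rules gives the linear workspace bound, and the $t$-mode termination check for a component amounts to deciding, for each of the finitely many rules, whether its left-hand side occurs as an ordered scattered subsequence in the current tape contents---a finite-state predicate that an LBA can evaluate within the existing linear space. Nondeterminism then takes care of the possibly unbounded sequence of rule applications inside one component: infinite branches simply do not accept, while any genuine $u\Rightarrow^{t}_{P_i}v$ step is witnessed by some finite branch.

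The paper, however, does not build an LBA (nor a monotone grammar) at all; its proof is a two-line appeal to known results: that every propagating scattered context grammar can be simulated by a context-sensitive grammar, and that CD grammar systems with context-sensitive components working in $t$ mode generate only context-sensitive languages, so that composing the two reductions yields the lemma. Your route is more self-contained and makes the role of the propagating hypothesis explicit, at the price of handling the $t$-mode bookkeeping yourself; the paper's route offloads precisely the delicate point you flagged---faithful simulation of the ``apply until blocked'' discipline---onto the cited CD grammar system theorem.
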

\begin{proof}
Recall that \cite{dassow1989regulated} shows that any scattered context grammar can be simulated by context-sensitive grammar. Similarly, \cite{csuhaj1994grammar} shows that any CDGS with context-sensitive components working in $t$ mode can be transformed to equivalent CSG. Based on those two facts, it is easy to show that any SCGS can be simulated by CSG.
\end{proof}

\begin{lemma}
\label{lemma:cssubsetscgs}
$\mathcal{L}(CS) \subseteq \mathcal{L}(SCGS)$
\end{lemma}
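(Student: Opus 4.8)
The plan is to combine the Kuroda-normal-form reduction noted above with a simulation that turns each context-sensitive rule into a short, guarded scattered-context protocol run by the two cooperating components. So fix a CSG \(G=(N,T,P,S)\) whose rules all have one of the four Kuroda forms, and build a two-component propagating SCGS \(G'=(N',T,S,P_1,P_2)\) with \(L(G')=L(G)\). The rules of forms \(A\to CD\), \(A\to C\) and \(A\to a\) are context-free and are simulated directly by single-nonterminal scattered-context rules \((A)\to(CD)\) and the like; since every right-hand side is nonempty, \(G'\) stays propagating. The only genuine obstacle is the length-preserving rule \(AB\to CD\): the scattered-context rule \((A,B)\to(C,D)\) matches \emph{any} \(A\) lying somewhere to the left of \emph{any} \(B\), not an \emph{adjacent} pair, so on its own it is unsound.

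My key device for adjacency is the universal-quantifier power hidden in the \(t\)-mode: a \(t\)-derivation of a component halts only when none of its rules applies anywhere, a condition quantified over all positions. Enrich \(N'\) with marked copies \(\bar A,\bar B\) and a trap nonterminal \(\bot\) that no rule rewrites. To simulate \(AB\to CD\), first mark a candidate with \((A,B)\to(\bar A,\bar B)\); then, in the partner component, include the verification rules \((\bar A,Z,\bar B)\to(\bar A,\bot,\bar B)\) for every symbol \(Z\). Such a rule fires exactly when some symbol sits strictly between the two marks, i.e. when the guessed pair is \emph{not} adjacent, and it plants the irremovable \(\bot\), so that no terminal string can ever be reached. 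When the marks are adjacent no admissible \(Z\) exists, the verification rule is inapplicable, and the completion rule \((\bar A,\bar B)\to(C,D)\) finishes the step. Hence precisely the adjacent guesses survive.

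To fit everything into exactly two components I would carry a small \emph{phase marker} in the sentential form and guard the rules by it, letting \(P_1\) and \(P_2\) alternate: one \(t\)-step of \(P_1\) marks a single candidate site and then blocks (no further guarded rule applies, so \(t\)-mode stops it), control passes to \(P_2\), which verifies adjacency, performs the completion, erases the marks, resets the phase, and again blocks. Here the \(t\)-mode does double duty: it drives each guarded phase to completion and forces the hand-off once the guard no longer matches. I would maintain the invariant that between successive \(t\)-steps the sentential form is over \(V=N\cup T\) with a single clean phase marker; completeness then follows by marking, at each macro-step, exactly the pair that \(G\) rewrites, and soundness follows from the trap argument together with the invariant.

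The main obstacle is precisely this control problem: adjacency cannot be tested by any single propagating scattered-context rule, and it must be certified while using only two rule sets in \(t\)-mode without the maximality silently over-applying the marking rule to every occurrence at once. Engineering the phase markers and blocking configurations so that each \(t\)-step realizes one—and only one—CSG rule application, and so that stray marks are always either cleaned or trapped, is the heart of the argument and the reason two cooperating propagating components suffice where a single propagating scattered-context grammar is not known to. Once the construction and both inclusions are checked, \(L(G')=L(G)\) gives \(\mathcal{L}(CS)\subseteq\mathcal{L}(SCGS)\), which with Lemma~\ref{lemma:scgssubsetcs} yields \(\mathcal{L}(SCGS)=\mathcal{L}(CS)\).
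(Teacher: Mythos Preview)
Your adjacency test is unsound. In the non-adjacent configuration $\ldots \bar A\,Z\,\bar B \ldots$, both the trap rule $(\bar A,Z,\bar B)\to(\bar A,\bot,\bar B)$ and the completion rule $(\bar A,\bar B)\to(C,D)$ are applicable, because the latter is a scattered-context rule and ignores whatever lies between $\bar A$ and $\bar B$. The $t$-mode only requires that a component keep rewriting until no rule applies; it does not prescribe \emph{which} applicable rule fires. Hence the branch that applies completion first produces $\ldots C\,Z\,D\ldots$, the marks are gone, no rule of $P_2$ applies any longer, $P_2$ halts, and you have simulated $AB\to CD$ on a non-adjacent pair. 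Phase markers do not rescue this: whatever rule flips the phase from ``verify'' to ``complete'' is itself applicable concurrently with the trap and can fire first, and putting completion in $P_1$ merely relocates the same race to the phase-change rule that must then live in $P_2$. So the claim that ``precisely the adjacent guesses survive'' is not established.

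The paper avoids this race by a different mechanism. After the first component simulates one rule it rewrites \emph{every} remaining symbol of the sentential form to a boundary-marked copy; the second component then sweeps a single cursor left to right, each check rule erasing the marks on a pair whose inner boundaries agree. From a symbol carrying the ${<}$ boundary the \emph{only} check rule pairs it with a ${>}$ symbol, so anything sitting between the two context-sensitive positions is necessarily skipped and keeps its marks. Once the cursor is gone, the sole rule of $P_2$ still applicable to a leftover marked symbol is the blocking rule, and $t$-mode then \emph{forces} it. The crucial difference is that blocking is not in competition with completion; it becomes the unique applicable option only after the sweep has finished. To repair your scheme you would have to arrange the same property---that in every non-adjacent branch the trap is eventually the only applicable rule---and as written it never is.
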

Take any CSG $G=(N,T,P,S)$ satisfying Kuroda normal form. An equivalent SCGS $\Gamma=(N_{GS}, T,$ $ \fplus{S}, P_{1}, P_{2})$ can be constructed using the following constructions. Set $N_{GS} = N\cup \{!\} \cup N_{T}\cup N_{first} \cup N_{CF} \cup N_{CS} \cup N_{cur}$ ($!\notin N\cup T$). Where:

  {\small\begin{flalign*}
      N_{T}=&\{\trm{a}:\forall a \in T\}\\
      N_{\mpone}=&\{\fplus{X}:\forall X\in N\cup N_{T}\}\\
      N_{\mptwo}=&\{\fminus{X}:\forall X\in N\cup N_{T}\}\\
    N_{\mpthr}=&\{\fstar{X}:\forall X\in N\cup N_{T}\}\\
    N_{first}=&N_{\mpone}\cup N_{\mptwo}\cup N_{\mpthr}\\
    N_{CF\mpone}=&\{{X_{|}}:\forall X\in N_{\mpone}\}\\
    N_{CF\mptwo}=&\{{X_{|}}:\forall X\in N_{\mptwo}\}\\
    N_{CF\mpthr}=&\{{X_{|}}:\forall X\in N_{\mpthr}\}\\
    N_{CF}=&\{\cf{X}:\forall X\in N\} \cup N_{CF\mpone}\cup N_{CF\mptwo} \cup N_{CF\mpthr}\\
	  N_{CS}=&\{\rcs{X}:\forall X\in N\} \cup  \{{X_{<}}:\forall X\in N_{first}\} \cup\{\lcs{X}:\forall X\in N\}\\
    N_{cur}=&\{\rcsc{X}:\forall X\in N \cup N_{first}\}\cup\{\cfc{X}:\forall X\in N \cup N_{first}\}.
      \end{flalign*}}%

  Analogically to sets $N_{CF\mpone}, N_{CF\mptwo}$ and	$N_{CF\mpthr}$, we call subsets of $N_{CS}$ and $N_{cur}$ constructed using the set $N_{\mpone}$ as $N_{CS\mpone}$, and $N_{cur\mpone}$, respectively. We use similar naming convention for subsets constructed using the $N_{\mptwo}$ and $N_{\mpthr}$.

  Set $P_{1}$ to the union of the following sets:

  \begin{flalign*}
  P_{T}^{1}=&\{(\fplus{X})\rightarrow (\fstar{X}):\forall X\in N\cup N_{T}\}\\
   \cup& \{(\fstar{X},\trm{a})\rightarrow (\fstar{X},a):\\&\forall X\in N\cup N_{T},\forall \trm{a} \in N_{T}\}\\
   \cup& \{(\strm{a})\rightarrow (a):\forall \trm{a}\in N_{T}\}\\
   P_{AtoBC}^{1}=&\{(\fplus{X},A)\rightarrow (\mcf{X},\cf{B}\cf{C}): \\&\forall X\in N_{T}\cup N, \forall p\in P, p=A\rightarrow BC\}\\
   \cup& \{(\fplus{A})\rightarrow (\mcf{B}\cf{C}): \forall p\in P, p=A\rightarrow BC\}\\
   P_{AtoB}^{1}=&\{(\fplus{X},A)\rightarrow (\mcf{X},\cf{B}): \\&\forall X\in N_{T}\cup N, \forall p\in P, p=A\rightarrow B\}\\
    \cup&\{(\fplus{A})\rightarrow (\mcf{B}): \forall p\in P, p=A\rightarrow B\}\\
  \end{flalign*}
     \begin{flalign*}
  P_{Atoa}^{1}=&\{(\fplus{X},A)\rightarrow (\mcf{X},\cftrm{a}): \\&\forall X\in N_{T}\cup N, \forall p\in P, p=A\rightarrow a\}\\
  \cup&\{(\fplus{A})\rightarrow (\mcftrmf{a}): \forall p\in P, p=A\rightarrow a\}\\
  P_{ABtoCD}^{1}=&\{(\fplus{X},A,B)\rightarrow (\mcf{X},\rcs{C},\lcs{D}): \\&\forall X\in N_{T}\cup N, \forall p\in P, p=AB\rightarrow CD\}\\
  \cup&\{(\fplus{A},B)\rightarrow (\mrcs{C},\lcs{D}): \\&\forall p\in P, p=AB\rightarrow CD\}\\
  P_{phase2}^{1}=&\{(X,B)\rightarrow (X,\cf{B}): \\&\forall B \in N \cup N_{T},X \in N_{CS\mptwo}\cup N_{CF\mptwo}\}
  \end{flalign*}

set $P_{2}$ to the union of these subsets:

\begin{flalign*}
  P_{init}^{2}=&\{(\mcf{X})\rightarrow (\pcfc{X}): \forall X\in N_{T}\cup N\}\\
  \cup& \{(\mrcs{X})\rightarrow (\prcsc{X}): \forall X\in N_{T}\cup N\}\\
  P_{check}^{2}=&\{(\cfc{A},\cf{B})\rightarrow (A,\cfc{B}): \forall X,A,B\in N_{T}\cup N\}\\
  \cup& \{(\rcsc{A},\lcs{B})\rightarrow (A,\cfc{B}): \forall X,A,B\in N_{T}\cup N\}\\
  P_{checkf}^{2}=&\{(\pcfc{A},\cf{B})\rightarrow (\fplus{A},\cfc{B}): \forall A,B\in N_{T}\cup N\}\\
  \cup& \{(\prcsc{A},\lcs{B})\rightarrow (\fplus{A},\cfc{B}): \forall A,B\in N_{T}\cup N\}\\
  P_{end}^{2}=&\{(\pc{A})\rightarrow (\fplus{A}): \forall A\in N_{T}\cup N\}\\
  \cup& \{(\fplus{A},\cfc{B})\rightarrow (\fplus{A},B): \forall A,B\in N_{T}\cup N\}\\
  \cup& \{(\pcfc{A})\rightarrow (\fplus{A}): \forall A\in N_{T}\cup N\}\\
  P_{block}^{2}=&\{(\cf{X})\rightarrow (!): \forall X \in N_{T}\cup N\}\\
  \end{flalign*}

\begin{basicIdea}We will now briefly describe how the resulting SCGS $\Gamma$ simulates the input CSG $G$. The system consists of two components, both working in $t$ mode. The computation of $\Gamma$ consists of two phases. During the first one, all terminals are represented by a nonterminal variant of themselves. The simulation itself takes place during the first phase. 

 The simulation in $\Gamma$ of each application of one rule of $G$ consists of two parts. Firstly, the first component applies the selected rule using the modified nonterminals contained in the sets $N_{CS}$ and $N_{CF}$. Symbols of the type $\rcs{X}$ denote that the rewriting is done in a context-sensitive way and that the remaining symbol on the right hand side of the rule should appear immediately right of the symbol. Similarly $\lcs{X}$ denotes that the rest of the right hand side of the rule should appear immediately left of the symbol. Symbols of the form $\cf{X}$ then represent context-free rewriting. After the application of the rule, the first component rewrites all remaining symbols to their context-free variant and then deactivates. This is done using the rules of the set $P_{phase2}^{1}$.  The fact that only one of the rules was applied is checked using the first symbol of the sentential form. This symbol is of the form $\fplus{X}$ or $\fminus{X}$ (plus the context-sensitive and context-free versions), where the marks $\mpone$ and $\mptwo$ indicate, whether next rule should be simulated, or remaining symbols should be rewritten to their context-free variant.

 The second component then checks, whether the first component applied the rule correctly. This is done using the special $\wedge$ mark. This symbol indicates, which symbol is currently checked, we will call this symbol \emph{current symbol}. Symbols are checked in pairs, where the first symbol of the pair is the current symbol and the second symbol is some symbol right of the first one. During this check, the special marks ($|,<,>$) on the adjacent sides of those symbols are checked and removed and the $\wedge$ mark is moved to the other symbol of the pair. Since the first symbol of the pair is always the current symbol, the $\wedge$ moves from the left side of the sentential form to the right, with no way of returning back left. When all of the symbols are checked, the second component is deactivated and the first one simulates new rule. Since the components have scattered context rules, it is not guaranteed that adjacent symbols are always checked by the second component. Because of this, set of rules $P_{block}^{2}$ is created. When some of the symbols is skipped during the checking phase, these rules will block the generation of sentence by $\Gamma$.

 The second phase, which rewrites all nonterminals to terminals, is started by rewriting of the first symbol $\fplus{X}$ to $\fstar{X}$. Then for each symbol $\trm{a}$, there is a rule of the form $(\fstar{X},\trm{a})\rightarrow (\fstar{X},a)$, where $a$ is corresponding terminal symbol. Finally, the leftmost symbol itself is rewritten to its terminal form. Since all the rules of all components always check the first symbol, after this step no further rewriting can be done and all nonterminals that remain in the sentential form cannot be removed. This phase is represented by set $P_{T}^{1}$.
\end{basicIdea}

Next, we sketch a formal proof that $L(G)=L(\Gamma)$. Its fully rigorous version is left to the reader.

 \begin{claim}
 \label{claimmarks}
In any sentential form, there is always at most one symbol marked with any of $\mpone, \mptwo, \mpthr$.
\end{claim}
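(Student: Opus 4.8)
The plan is to argue by induction on the length of the derivation in $\Gamma$, tracking the single quantity $m(w)$, the number of symbols of a sentential form $w$ that carry one of the three first-symbol marks $\mpone,\mptwo,\mpthr$. The first thing to record is that, by construction of $N_{GS}$, these marks occur only as the left superscript of a symbol, so every symbol of $N_{GS}$ carries at most one of $\mpone,\mptwo,\mpthr$; hence $m(w)$ is well defined, and the claim is exactly the assertion that $m(w)\le 1$ for every $w$ reachable from the axiom. The base case is immediate: the only axiom is $\fplus{S}$, and $m(\fplus{S})=1$.

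For the inductive step I would show that no single application of a rule of $P_1$ or $P_2$ increases $m$. The crucial bookkeeping observation is that in any rule $(X_1,\dots,X_k)\to(\alpha_1,\dots,\alpha_k)$ of $\Gamma$, a marked component $X_i$ can be matched only by a marked symbol of the current sentential form (marked and unmarked symbols lie in disjoint parts of $N_{GS}$), while the unmatched portion of the form is carried over unchanged. Consequently, if $w\Rightarrow w'$ via such a rule, then
\[
m(w') = m(w) - \#\{\, i : X_i \text{ carries a mark}\,\} + \#\{\text{marked symbols in } \alpha_1\cdots\alpha_k\}.
\]
It therefore suffices to verify, for every rule template, that its right-hand side contains no more marked symbols than its left-hand side; the inequality $m(w')\le m(w)\le 1$ then follows from the inductive hypothesis.

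This inequality I would then confirm by a direct inspection of the rule sets. Each simulation rule of $P_{AtoBC}^{1}$, $P_{AtoB}^{1}$, $P_{Atoa}^{1}$ and $P_{ABtoCD}^{1}$ consumes the single $\mpone$-marked leading symbol $\fplus{X}$ (or $\fplus{A}$) and emits exactly one $\mptwo$-marked symbol, all other emitted symbols (such as $\cf{B}$, $\cf{C}$, $\rcs{C}$, $\lcs{D}$, $\cftrm{a}$) being built over $N$ and hence carrying no first-symbol mark; $P_{phase2}^{1}$ keeps its single $\mptwo$-marked leading symbol and adds only unmarked symbols. In $P_2$, the rules of $P_{init}^{2}$, $P_{checkf}^{2}$ and $P_{end}^{2}$ each rewrite one marked symbol into exactly one marked symbol, whereas $P_{check}^{2}$ and $P_{block}^{2}$ act only on symbols bearing no first-symbol mark and introduce none. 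Finally, in the second phase the rules of $P_{T}^{1}$ turn the single $\mpone$-mark into a single $\mpthr$-mark, preserve that unique $\mpthr$-mark while terminalizing the remaining symbols, and at the last step delete it. In every case the number of marks on the right-hand side is at most that on the left-hand side, completing the induction.

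The argument is conceptually routine; the only genuine work lies in the completeness of the case analysis. The step I expect to require the most care is verifying that no rule ever attaches a mark $\mpone,\mptwo,\mpthr$ to a symbol produced from a previously unmarked position — in particular for the two-symbol simulation rules, where one must check that the second output symbol ($\cf{C}$, $\lcs{D}$, or $\cftrm{a}$) is genuinely unmarked, so that the unique mark is merely transferred from the leading symbol rather than duplicated.
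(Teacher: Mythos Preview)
Your proposal is correct and follows essentially the same approach as the paper: the paper's proof simply observes that every rule has at most one marked symbol on its right-hand side and that any rule producing a marked symbol also consumes one on its left-hand side, which is exactly the invariant your induction establishes. Your version is more explicit (spelling out the displacement formula for $m$ and checking each rule set), but the underlying argument is identical.
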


\begin{proof}
Observe that no rule contains more than one symbol marked with any of $\mpthr,\mptwo,\mpone$ on the right hand side. Furthermore observe that if any marked symbol does appear on the right hand side of a rule, there is also a marked symbol on the left hand side of the same rule. Thus no new marked symbols can be introduced into the sentential form.
\end{proof}

 \begin{claim}
Any derivation that generates a sentence ends with a sequence of rules of the form $p_1p_{2_{1}}\dots p_{2_{n}}p_3$, where $p_1,p_{2_{i}},p_3\in P_{T}^{1}, 1\leq i\leq n, n\geq0$, where $p_1$, $p_{2_{i}}$ and $p_3$ are from the first, second and third subset of $P_{T}^{1}$, respectively.  No rule from $P_{T}^{1}$ is applied before this sequence.
\end{claim}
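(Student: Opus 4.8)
The plan is to make the whole argument turn on the star mark $\mpthr$, which is the signature of the second phase. First I would scan every rule of $P_{1}$ and $P_{2}$ and record where $\mpthr$ occurs: on a right-hand side it appears only in the first subset of $P_{T}^{1}$, through $(\fplus{X})\rightarrow(\fstar{X})$; the second subset carries it unchanged as a left context, the third subset erases it, and no other rule mentions $\mpthr$ at all. Combined with Claim~\ref{claimmarks} (at most one symbol ever bears any of $\mpone,\mptwo,\mpthr$) this yields two facts I will reuse: a star-marked symbol is present only after an application of the first subset of $P_{T}^{1}$, and whenever it is present it is the \emph{unique} marked symbol of the sentential form.

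From these facts both halves of the statement follow quickly. The second and third subsets of $P_{T}^{1}$ each require a star-marked symbol ($\fstar{X}$, respectively $\strm{a}$) on their left-hand side, so neither can fire before the first application of the first subset; hence no $P_{T}^{1}$ rule precedes the step $p_1$, and the terminal segment must open with such a $p_1$. Once $p_1$ has created the unique star, Claim~\ref{claimmarks} forbids any symbol marked by $\mpone$ or $\mptwo$, which disables every simulation rule: every rule in $P_{AtoBC}^{1}$, $P_{AtoB}^{1}$, $P_{Atoa}^{1}$, $P_{ABtoCD}^{1}$, $P_{phase2}^{1}$, $P_{init}^{2}$, $P_{checkf}^{2}$ and $P_{end}^{2}$ requires an occurrence of a symbol marked by $\mpone$ or $\mptwo$. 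The only rules left even potentially applicable are then the second and third subsets of $P_{T}^{1}$, the checking rules $P_{check}^{2}$, and the blocking rules $P_{block}^{2}$.

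At this point I would invoke the hypothesis that the derivation actually yields a sentence. A rule of $P_{block}^{2}$ creates the symbol $!$, which occurs on no left-hand side and is therefore permanent, so no successful derivation ever uses $P_{block}^{2}$; and $P_{check}^{2}$ requires a $\wedge$-marked symbol, which is absent once the checking component has deactivated and $p_1$ becomes enabled. Thus after $p_1$ the only productive steps are the second subset of $P_{T}^{1}$, rewriting an arbitrary $\trm{a}$ to $a$ with the star symbol as context, and the third subset, rewriting the starred leftmost symbol $\strm{a}$ to $a$. Every $\trm{a}$ must eventually become terminal and only the second subset does so, so all such steps, say $n\ge 0$ of them, occur here; the leftmost, starred symbol can only be cleared by the third subset.

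Finally I would pin down the order and the closure. Applying the third subset removes the unique star, and by the first paragraph no rule can reintroduce a $\mpthr$ mark (that would need an absent $\mpone$-marked symbol); hence the third subset is used exactly once, no second-subset step can follow it, and a second $p_1$ is impossible. After it the string carries no mark and, in a successful derivation, is a word over $T$ on which no rule fires, so the derivation halts. This is precisely the segment $p_1 p_{2_1}\dots p_{2_n} p_3$ with the three subsets in the stated order. I expect the principal obstacle to be the assertion used above that, at the moment $p_1$ becomes enabled, the sentential form is already fully resolved, with no $\wedge$ marks and no context-free decorations surviving, so that neither $P_{check}^{2}$ nor $P_{block}^{2}$ can slip between $p_1$ and $p_3$; establishing this cleanly rests on the correctness of the checking component, i.e.\ on the invariants that the fully rigorous proof must set up.
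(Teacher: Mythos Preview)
Your argument is correct and follows essentially the same line as the paper's: both hinge on the $\mpthr$ mark, use Claim~\ref{claimmarks} to conclude that once $\mpthr$ appears no $\mpone$/$\mptwo$-dependent rule can fire, and appeal to the checking-component invariants (the paper forward-references Claim~\ref{claim:secondComp}) for the cleanliness of the form at the moment $p_1$ is applied. Your version is in fact tidier than the paper's sketch: you argue directly from a scan of all rule sets rather than starting from an assumed sentential form $\fplus{x}_0'\trm{x}_1\dots\trm{x}_n$, and you make the ``no $P_T^1$ rule before $p_1$'' part explicit, which the paper leaves implicit.
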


\begin{proof}
  Recall that only rules which have terminals on the right hand side are in set $P_{T}^{1}$ which is defined as follows (in this proof, we named each of its subsets for the sake of simplicity):
  \begin{flalign*}
  P_{T}^{1}=&{}_1P_{T}^{1}\cup{}_2P_{T}^{1}\cup{}_3P_{T}^{1}\\
  {}_1P_{T}^{1}=&\{(\fplus{X})\rightarrow (\fstar{X}):\forall X\in N\cup N_{T}\}\\
  {}_2P_{T}^{1}=&\{(\fstar{X},\trm{a})\rightarrow (\fstar{X},a):\forall X\in N\cup N_{T},\forall \trm{a} \in N_{T}\}\\
  {}_3P_{T}^{1}=&\{(\strm{a})\rightarrow (a):\forall \trm{a}\in N_{T}\}
  \end{flalign*}

  Suppose any sentential form $\chi$ such that $\chi=\fplus{x}{}_0^{'}\trm{x}{}_1\dots\trm{x}{}_n$, where $\trm{x}{}_i\in N_T$ $0\leq i \leq n$ and $\fplus{x}{}_0^{'} \in N_{\mpthr}$. Observe that all rules that do rewriting to terminals check the existence of a symbol $\fstar{X}$ in the sentential form. This symbol is created in a following way: $$\fplus{x}{}_0^{'}\trm{x}{}_1\dots\trm{x}{}_n\Rightarrow \fstar{x}{}_0^{'}\trm{x}{}_1\dots\trm{x}{}_n [p], p\in {}_1P_{T}^{1}$$

  Careful examination of sets $P_1$ and $P_2$ shows that only rules with $\fstar{x}{}_0^{'}$ on its left hand side are in sets ${}_2P_{T}^{1}$ and ${}_3P_{T}^{1}$. Suppose the following derivation $$\fstar{x}{}_0^{'}\trm{x}{}_1\dots\trm{x}{}_n [p], p\in {}_1P_{T}^{1}\Rightarrow x{}_0\trm{x}{}_1\dots\trm{x}{}_n [p], p \in {}_3P_{T}^{1}, x_0 \in T$$
  Based on the claim \ref{claimmarks}, $alph(\trm{x}{}_1\dots\trm{x}{}_n)\cap (N_{\mpone}\cup N_{\mptwo} \cup N_{\mpthr})=\emptyset$. Each rule $p\in P_1$ contains some symbol from $N_{\mpone}\cup N_{\mptwo} \cup N_{\mpthr}$ on its left hand side. There is thus no $\chi^{'}\neq\chi, \chi = x{}_0\trm{x}{}_1\dots\trm{x}{}_n $ such that $\chi\Rightarrow\chi^{'}[p], p\in P_1$. Further no rule from $P_2$ can be used (see claim \ref{claim:secondComp}). For any successful derivation $p\in {}_3P_{T}^{1}$ must thus be used as a last rule of this derivation.

  Suppose $\fplus{x}{}_0^{'}\trm{x}{}_1\dots\trm{x}{}_n\Rightarrow \fstar{x}{}_0^{'}\trm{x}{}_1\dots\trm{x}{}_n [p], p\in {}_1P_{T}^{1}$, and further let  $\chi_1=\fstar{x}{}_0^{'}\trm{x}{}_1\dots\trm{x}{}_n$ where $|\chi_1|>1$. Based on the previous paragraph, in any successful derivation, the following sequence of rules has to be applied $$\chi_1\Rightarrow \chi_2 [p_1] \Rightarrow \dots \Rightarrow \chi_n [p_n]$$
  where $\chi_n=\fstar{x}{}_0^{'}{x}{}_1\dots{x}{}_n [p_n]$, $p_i \in {}_2P_{T}^{1}$. For each $\chi_i$ and $\chi_{i+1}$ following holds $|\chi_i|_T = |\chi_{i+1}|_T -1$.

\end{proof}

We have just shown that the rules from $P_{T}^{1}$ are only applied right before the end of the successful simulation. Consequently, we do not mention this subset in any of the following proofs.

\begin{claim}
\label{firstcompsinglerule}
The first component of $\Gamma$ rewrites sentential forms of the form $\fplus{X}\alpha$ to a string of one of the following forms
\begin{enumerate}
\item{$\mcf{Y}\beta$}
\item{$\mrcs{Y}\gamma$}
\end{enumerate}
where $X,Y\in N \cup N_T$, $\alpha \in (N\cup N_{T})^*$, $\beta,\gamma \in (N_{CS}\cup N_{CF})^{*}$ (such that Claim \ref{claimmarks} holds) where either (a) or (b) given next is true:
\begin{enumerate}
\item[(a)]{$\beta\in (N_{CF})^{*}$}
\item[(b)]{$\beta= Y_0\dots \rcs{U}\dots\lcs{V}\dots Y_n$, where $Y_i \in N_{CF}, 0\leq i\leq n$ and $\rcs{U},\lcs{V}\in N_{CS}$}
\end{enumerate}
and $\gamma = Y_0\dots\lcs{V}\dots Y_n$, where $Y_i \in N_{CF}, 0\leq i\leq n$ and $\lcs{V}\in N_{CS}$.
\end{claim}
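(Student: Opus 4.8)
The plan is to analyze the shape of an arbitrary terminating derivation $\fplus{X}\alpha \Rightarrow^{t}_{P_1} w$ and to show that it necessarily splits into two successive stages: a single rule-simulation step followed by the exhaustive closure under $P_{phase2}^{1}$. Throughout I would lean on Claim \ref{claimmarks} (at most one symbol carries a mark from $\{\mpone,\mptwo,\mpthr\}$) and, as announced just above, ignore $P_{T}^{1}$.

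First I would argue that the very first rule applied must come from $P_{AtoBC}^{1}\cup P_{AtoB}^{1}\cup P_{Atoa}^{1}\cup P_{ABtoCD}^{1}$. Indeed, the leftmost symbol of $\fplus{X}\alpha$ is $\mpone$-marked, whereas every rule in $P_{phase2}^{1}$ demands a leftmost symbol from $N_{CS\mptwo}\cup N_{CF\mptwo}$; hence $P_{phase2}^{1}$ cannot fire yet, and the only enabled rules are the simulating ones. Each of them consumes the unique $\mpone$-marked symbol (Claim \ref{claimmarks}) sitting in the leftmost position and replaces it by a $\mptwo$-marked one, namely $\mcf{\cdot}$ in the context-free cases and $\mrcs{\cdot}$ in the case where the leftmost symbol is the left partner of an $AB\to CD$ rule.

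Next I would show that after this step no simulating rule can fire again: each requires an $\mpone$-marked symbol on its left-hand side, none is present anymore, and none can be reintroduced, since by Claim \ref{claimmarks} no right-hand side creates an $\mpone$-mark out of nothing. Consequently the derivation continues solely through $P_{phase2}^{1}$, whose rule $(X,B)\to(X,\cf{B})$ leaves the $\mptwo$-marked leftmost symbol untouched and turns one plain symbol $B\in N\cup N_{T}$ into $\cf{B}$. Since every application strictly decreases the number of plain symbols, the process terminates, and the $t$-mode requirement forces it to run until no plain symbol remains; at that point every non-leading symbol lies in $N_{CF}\cup N_{CS}$ while the leading one is still $\mptwo$-marked, which already pins down the gross form $\mcf{Y}\beta$ or $\mrcs{Y}\gamma$.

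It then remains to read off the fine shape of $\beta$ and $\gamma$ by a case analysis on the simulating rule. A context-free rule ($A\to BC$, $A\to B$, or $A\to a$) only introduces context-free symbols (such as $\cf{B}\cf{C}$, $\cf{B}$, $\cftrm{a}$, or $\mcf{B}\cf{C}$ when the leftmost symbol is itself rewritten), so after the $P_{phase2}^{1}$ closure the tail lies entirely in $N_{CF}$, giving form (1) with (a). For $AB\to CD$ with the leftmost symbol left in place, the rule $(\fplus{X},A,B)\to(\mcf{X},\rcs{C},\lcs{D})$ matches $A$ before $B$ to the right of the leading symbol, so $\rcs{C}$ precedes $\lcs{D}$ and all other positions become $N_{CF}$, giving form (1) with (b). Finally, if the leftmost symbol is the left partner $A$, the rule $(\fplus{A},B)\to(\mrcs{C},\lcs{D})$ yields a leading $\mrcs{C}\in N_{CS\mptwo}$ and a single $\lcs{D}$ in an otherwise $N_{CF}$ tail, giving form (2). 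The main obstacle I anticipate is not this case analysis but the rigorous justification of the two-stage decomposition under the $t$-mode semantics — that no simulating rule and no $P_{phase2}^{1}$ rule can interleave in an unintended order and that the $P_{phase2}^{1}$ closure is both terminating and complete — all of which rests on the single-mark invariant of Claim \ref{claimmarks} and on carefully tracking the symbol that occupies the leftmost position.
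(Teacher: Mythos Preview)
Your proposal is correct and follows essentially the same approach as the paper: first argue that from $\fplus{X}\alpha$ only a simulating rule in $P_{AtoBC}^{1}\cup P_{AtoB}^{1}\cup P_{Atoa}^{1}\cup P_{ABtoCD}^{1}$ can fire (since no $\mptwo$-marked symbol is present for $P_{phase2}^{1}$), then observe that this step removes the unique $\mpone$-mark and installs a $\mptwo$-mark, so only $P_{phase2}^{1}$ remains applicable and the $t$-mode forces it to run to exhaustion, after which a case split on the simulating rule yields the listed forms. The paper carries this out by explicitly enumerating the eight possible first steps, while you phrase it as a two-stage decomposition, but the argument is the same; one small wording point is that a scattered-context rule in $P_{phase2}^{1}$ does not literally demand a \emph{leftmost} symbol in $N_{CS\mptwo}\cup N_{CF\mptwo}$, only some such symbol to the left of $B$, though your conclusion is unaffected since no $\mptwo$-marked symbol exists anywhere in $\fplus{X}\alpha$.
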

\begin{proof}
Consider sentential form $\fplus{X}\alpha$ defined as above, where $\alpha=X_0\dots X_n$. Since there is no symbol from the alphabet $N_{cur}\cup N_{\mptwo}$ only rules of the first component can be used.

From $\fplus{X}\alpha$, $\Gamma$ makes a derivation step in one of the following eight ways (each derivation corresponds to one subset of the rules of the first component of $\Gamma$):
\begin{enumerate}
\item{\label{enum:1:1}$\fplus{X}X_0\dots X_{i-1}AX_{i+1}\dots X_n\\\Rightarrow \mcf{X}X_0\dots X_{i-1}\cf{B}\cf{C}X_{i+1}\dots X_n$}
\item{\label{enum:1:2}$\fplus{A}X_0\dots X_n\\\Rightarrow \mcf{B}\cf{C}X_0\dots X_n$}
\item{\label{enum:1:3}$\fplus{X}X_0\dots X_{i-1}AX_{i+1}\dots X_n\\\Rightarrow \mcf{X}X_0\dots X_{i-1}\cf{B}X_{i+1}\dots X_n$}
\item{\label{enum:1:4}$\fplus{A}X_0\dots X_n\\\Rightarrow \mcf{B}X_0\dots X_n$}
\item{\label{enum:1:5}$\fplus{X}X_0\dots X_{i-1}AX_{i+1}\dots X_n\\\Rightarrow \mcf{X}X_0\dots X_{i-1}\cftrm{a}X_{i+1}\dots X_n$}
\item{\label{enum:1:6}$\fplus{A}X_0\dots X_n\\\Rightarrow \mcftrmf{a}X_0\dots X_n$}

\item{\label{enum:1:7}$\fplus{X}X_0\dots X_{i-1}A\delta BX_{i+1}\dots X_n\\\Rightarrow \mcf{X}X_0\dots X_{i-1}\rcs{C}\delta \lcs{D}X_{i+1}\dots X_n$}
\item{\label{enum:1:8}$\fplus{A}\delta BX_0\dots X_n\\\Rightarrow \mrcs{C}\delta \lcs{D}X_0\dots X_n$}
\end{enumerate}
where $\delta\in (N\cup N_T)^*$. Observe that each of the generated strings is in one of the following forms:
\begin{itemize}
\item{$\mcf{X}\delta_{1}B\delta_{2}C\delta_{3}$ (\ref{enum:1:1}-\ref{enum:1:7})}
\item{$\mrcs{X}\delta_{1}\lcs{D}\delta_{2}$ (\ref{enum:1:8})}
\end{itemize}
where $\delta_{1},\delta_{2},\delta_{3}\in (N\cup N_T)^*$, $D\in N_{CS}$ (the second subset) and either $B,C \in N_{CF}$ or $B\in N_{CS}$ (the first subset) and $C\in N_{CS}$ (the second subset).

After this first rule is applied, the sentential form contains symbol marked with $\mptwo$. Since the components of $\Gamma$ work in $t$ mode, rules of the first component have to be applied as long as there are some symbols that can be rewritten. This means that the rules from the set $P_{phase2}^{1}$ have to be used now. Because $\delta_{1},\delta_{2},\delta_{3}\in (N\cup N_T)^*$ and the left hand sides of the rules from $P_{phase2}^{1}$ are defined for all symbols in $N\cup N_T$. Substring $\delta_{1}=Z_0\dots Z_n$, $Z_i \in N\cup N_T$ is rewritten to $\delta_{1}^{'}={}_{|}Z_{0|}\dots {}_{|}Z_{n|}$, $Z_i \in N\cup N_T, 0\leq i\leq n$. The same applies to $\delta_{2},\delta_{3}$. By using  $P_{phase2}^{1}$ we obtain one of the following sentential forms:
\begin{itemize}
\item[]{$\mcf{X}\delta_{1}B\delta_{2}C\delta_{3}\Rightarrow^{*}\mcf{Y}\beta$ }
\item[]{$\mrcs{X}\delta_{1}\lcs{D}\delta_{2}\Rightarrow^{*}\mrcs{Y}\gamma$ }
\end{itemize}
\end{proof}

\begin{claim}
During its activation, the first component applies no more than one rule of the simulated CSG. This follows from Claim \ref{firstcompsinglerule} and its proof.
\end{claim}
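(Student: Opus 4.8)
The plan is to derive this statement as an immediate consequence of Claim~\ref{firstcompsinglerule} together with Claim~\ref{claimmarks}, by tracking the unique $\mpone/\mptwo/\mpthr$ mark carried by the sentential form. First I would identify precisely which rules of the first component ``simulate a rule of $G$'': these are exactly the rules of $P_{AtoBC}^{1}$, $P_{AtoB}^{1}$, $P_{Atoa}^{1}$ and $P_{ABtoCD}^{1}$, whereas the rules of $P_{phase2}^{1}$ (and those of $P_{T}^{1}$, already excluded) perform no simulation step.

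The key observation is that every such simulating rule carries a $\mpone$-marked symbol, i.e.\ a symbol $\fplus{X}$, on its left-hand side, and rewrites it into a $\mptwo$-marked symbol, namely one of $\mcf{X}$ or $\mrcs{X}$. By Claim~\ref{claimmarks}, any sentential form contains at most one symbol marked with $\mpone$, $\mptwo$ or $\mpthr$, and no rule can introduce a fresh mark. Hence, once a single simulating rule has fired, the unique mark of the sentential form has turned from $\mpone$ into $\mptwo$ and can never revert to $\mpone$, so no $\fplus{X}$ symbol reappears.

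I would then combine this with the structure established in the proof of Claim~\ref{firstcompsinglerule}. An activation of the first component begins on a sentential form $\fplus{X}\alpha$, whose only mark is $\mpone$. The rules of $P_{phase2}^{1}$ require a $\mptwo$-marked context symbol (an element of $N_{CS\mptwo}\cup N_{CF\mptwo}$), which is absent at this point, so the first rule applied must be one of the simulating rules examined in the proof of Claim~\ref{firstcompsinglerule}; this is the single simulation step. Afterwards, as argued above, only rules of $P_{phase2}^{1}$ remain applicable, and these preserve the $\mptwo$ mark while simulating nothing. Consequently the component simulates at most one rule of $G$ per activation (in fact exactly one whenever the activation rewrites at all).

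The argument involves no genuine obstacle, since the substantive work was already carried out in Claims~\ref{claimmarks} and~\ref{firstcompsinglerule}. The only point demanding care is the routine case analysis confirming that every rule of $P_{AtoBC}^{1}$, $P_{AtoB}^{1}$, $P_{Atoa}^{1}$ and $P_{ABtoCD}^{1}$ indeed consumes the $\mpone$ mark, so that the invariant ``the sole surviving mark is $\mptwo$'' holds throughout the remainder of the activation; this is settled by direct inspection of the listed rule sets.
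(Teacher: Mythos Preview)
Your proposal is correct and aligns with the paper's approach: the paper gives no separate proof for this claim beyond the sentence ``This follows from Claim~\ref{firstcompsinglerule} and its proof,'' and your argument simply spells out that deduction, tracking the unique $\mpone/\mptwo$ mark (via Claim~\ref{claimmarks}) to conclude that after one simulating rule fires only $P_{phase2}^{1}$ remains applicable. There is nothing substantively different here; you have merely made explicit what the paper leaves to the reader.
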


\begin{claim}
\label{claim:secondComp}
The second component of $\Gamma$ rewrites any sentential form of the form $\mcf{X}\cfn{X}{0}\dots \cfn{X}{n}$ to a string of the form $\fplus{X}X_1\dots X_n$, where $X_i\in N\cup N_T, 0\leq i \leq n$ .
\end{claim}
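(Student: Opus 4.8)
The plan is to trace the unique non-blocking $t$-mode derivation of the second component and to verify that the string it produces is terminal for $P_2$. First I would determine the only ways the derivation can start. In the given form $\mcf{X}\cfn{X}{0}\dots\cfn{X}{n}$ the leading symbol $\mcf{X}$ is the only one carrying a mark from $\{\mpone,\mptwo,\mpthr\}$, no symbol carries $\wedge$, and no symbol carries a $<$ or a $>$ mark. Inspecting the five subsets of $P_2$ shows that every rule other than $(\mcf{X})\to(\pcfc{X})$ of $P_{init}^{2}$ and the blocking rule $(\cf{X})\to(!)$ of $P_{block}^{2}$ requires a $\wedge$-marked symbol, a $<$ or a $>$ mark, or a leading symbol of the form $\fplus{A}$, none of which is present. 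Whenever $(\cf{X})\to(!)$ is used the symbol $!$ appears, and since no rule rewrites $!$ and $!\notin T$, no terminal word can ever follow; such derivations are therefore irrelevant, and I may assume the derivation opens with the $P_{init}^{2}$ rule, after which the leading symbol becomes $\pcfc{X}$ and carries the single current-symbol mark $\wedge$.

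Next I would formulate and maintain the invariant that drives the computation: at every step there is exactly one $\wedge$, every symbol strictly to its left is already in final bar-free form (the leading one as $\fplus{\cdot}$, the rest plain), the $\wedge$-symbol carries a right bar, and every symbol strictly to its right is still a two-bar $\cf{\cdot}$. The first advance of $\wedge$ uses $(\pcfc{A},\cf{B})\to(\fplus{A},\cfc{B})$ of $P_{checkf}^{2}$, which finalizes the leading symbol to $\fplus{A}$ and moves $\wedge$ one place to the right while deleting the two matched adjacent bars; each later advance uses $(\cfc{A},\cf{B})\to(A,\cfc{B})$ of $P_{check}^{2}$, finalizing the current symbol to the plain $A$ and passing $\wedge$ on. When $\wedge$ reaches the rightmost symbol, the rule $(\fplus{A},\cfc{B})\to(\fplus{A},B)$ of $P_{end}^{2}$ strips its $\wedge$ and bar, leaving exactly $\fplus{X}X_1\dots X_n$; the degenerate one-symbol case is closed instead by $(\pcfc{A})\to(\fplus{A})$. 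I would then check that this string is terminal for $P_2$: it has no $\mptwo$-marked leading symbol, no $\wedge$ (hence no $\cfc{\cdot}$), no $\pc{\cdot}$, $\pcfc{\cdot}$, $\prcsc{\cdot}$ or $\mrcs{\cdot}$, and no two-bar $\cf{\cdot}$, so none of the five subsets applies.

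The main obstacle is to show that this orderly left-to-right sweep is the only non-blocking possibility, which is exactly what makes the checking faithful. Because the rules are scattered-context, the pair $(\cfc{A},\cf{B})$ may match a non-adjacent $\cf{B}$, letting $\wedge$ jump over an intervening two-bar symbol $\cf{Z}$; likewise $(\fplus{A},\cfc{B})\to(\fplus{A},B)$ of $P_{end}^{2}$ may fire on a $\wedge$-symbol that is not the last. The key observation, already noted in the Basic Idea, is that $\wedge$ only ever moves strictly to the right and is never reintroduced further left, so any symbol once skipped keeps both of its bars forever. In $t$ mode the component cannot halt while such a $\cf{Z}$ survives, because $(\cf{Z})\to(!)$ is then applicable; hence every deviating derivation is forced to introduce $!$ and is blocked. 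Consequently the only terminating, $!$-free derivation is the adjacent-pair sweep described above, and it rewrites $\mcf{X}\cfn{X}{0}\dots\cfn{X}{n}$ precisely to $\fplus{X}X_1\dots X_n$, as claimed.
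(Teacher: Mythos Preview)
Your proposal is correct and follows essentially the same approach as the paper: identify that only $P_{init}^{2}$ (or a blocking rule) can start the run, trace the left-to-right migration of the $\wedge$ mark via $P_{checkf}^{2}$ and $P_{check}^{2}$, and argue that any non-adjacent match leaves a surviving $\cf{\cdot}$ symbol which, in $t$-mode, forces an application of $P_{block}^{2}$. Your write-up is in fact slightly more thorough than the paper's---you explicitly state the sweep invariant and verify that the output string $\fplus{X}X_1\dots X_n$ is terminal for $P_2$, points the paper leaves implicit---but the underlying argument is the same.
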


\begin{proof}
Suppose sentential form\footnote{The case where $|\chi|$=1 is trivial and is left to the reader.} $\chi=\mcf{X}\cfn{X}{0}\dots \cfn{X}{n}$ where $X_i\in N\cup N_T, 0\leq i \leq n$. Observe that $|\chi|_{N_{cur}}=0$. Only rules\footnote{We ignore the set of blocking rules $P_{block}^2$ for now.} that can be used are thus from the first subset of $P_{init}^2$. This leads to $$\chi=\mcf{X}\cfn{X}{0}\dots \cfn{X}{n}\Rightarrow \chi_0=\pcfc{X}\cfn{X}{0}\dots \cfn{X}{n}$$

The only rule applicable to $\chi_0$ must be from the set $P_{checkf}^{2}$. This leads to:
 $$\chi_0=\pcfc{X}\cfn{X}{0}\dots \cfn{X}{n}\Rightarrow \chi_1=\fplus{X}\alpha_1 \cfcn{X}{i_1} \alpha_2$$
where $\alpha_1, \alpha_2 \in N_{CF}^{*}$. Again, careful observation of rules of the set $P_2$ shows that only rules from the set $P_{check}^{2}$ and $P_{end}^{2}$ may be used. The first option leads to following derivations:
$$\fplus{X}\alpha_1^{1} \cfcn{X}{i_1} \alpha_1^{2}\Rightarrow \fplus{X}\alpha_1^1 X_{i_1} \alpha_2^1 \cfcn{X}{i_2} \alpha_2^2\Rightarrow\dots\Rightarrow\fplus{X}\alpha_1^1X_{i_1}\alpha_2^1 X_{i_2} \dots \alpha_n^1\cfcn{X}{i_n}\alpha_{n}^2$$
where $\alpha_{k}^{j}\in N_{CF}^*, 1\leq k \leq n, j\in \{1, 2\}$. Further, there is no rule $p$ such that:
$$\fplus{X}\alpha_1^1X_{i_1}\alpha_2^1 X_{i_2} \dots \alpha_n^1\cfcn{X}{i_n}\alpha_{n}^2\Rightarrow \fplus{X}\alpha_1^1X_{i_1}\alpha_2^1 X_{i_2} \dots \cfc{Y}\dots \alpha_n^1{X}_{i_n}\alpha_{n}^2 [p]$$

Suppose $\fplus{X}\alpha_1^1X_{i_1}\alpha_2^1 X_{i_2} \dots \alpha_n^1\cfcn{X}{i_n}\alpha_{n}^2$ and rule $p\in P_{end}^2$:
$$\fplus{X}\alpha_1^1X_{i_1}\alpha_2^1 X_{i_2} \dots \alpha_n^1\cfcn{X}{i_n}\alpha_{n}^2\Rightarrow \fplus{X}\alpha_1^1X_{i_1}\alpha_2^1 X_{i_2} \dots \alpha_n^1{X}{i_n}\alpha_{n}^2 [p]$$

Suppose that adjacent symbols were always rewritten during the application of rules from the sets $P_{checkf}^2$ and $P_{check}^2$. This would mean that $\alpha_k^{j}=\varepsilon,  1\leq k\leq n, j\in \{1,2\}$ and we would thus obtain the desire sentential form $\fplus{X}X_1\dots X_n$, where $X_i\in N\cup N_T, 0\leq i \leq n$.

If, on the other hand, there was some $\alpha_l^{m}\neq \varepsilon, 1\leq l \leq n, m\in{1,2}$ this would mean that $$\pcfc{X}\cfn{X}{0}\dots \cfn{X}{n}\Rightarrow^* \fplus{X}\alpha$$ where $|\alpha|_{N_{cur}}=0$ and $|\alpha|_{N_{CF}}>0$. Since both GS components work in the $t$-mode, and there is some symbol from $N_{CF}$, the blocking symbols have to be introduced by the rules of the $P_{block}^2$ set. Because $|\alpha|_{N_{cur}}=0$, no other rules can be used on this form.
\end{proof}

\begin{claim}
\label{secondCompCS1}
The second component of $\Gamma$ rewrites any string of the form $\mrcs{X}\cfn{X}{0}$ $\dots\cfn{X}{j-1} \csln{X}{j}\dots \cfn{X}{n}$ to a string of the form $\fplus{X}X_1\dots X_n$, where $X_i\in N\cup N_T$, for all $i: 0\leq i \leq n$  if and only if $\cfn{X}{0}\dots\cfn{X}{j-1} =\varepsilon$; otherwise, blocking symbols are introduced.
\end{claim}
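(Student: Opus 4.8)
The plan is to trace a derivation of the second component one forced step at a time, using Claim~\ref{claimmarks} to control which rules can fire, and then to split on the emptiness of $\cfn{X}{0}\dots\cfn{X}{j-1}$. First I would note that, since the head carries the $\mptwo$ mark and Claim~\ref{claimmarks} guarantees it is the only such symbol, the only rule that can act on it is the second subset of $P_{init}^{2}$, which rewrites $\mrcs{X}$ to $\prcsc{X}$. I will carry the ever-present option of the blocking rules $P_{block}^{2}$ along separately, since the real content of the claim is whether a \emph{non-blocking} terminating derivation exists.

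Next I would show that the only non-blocking rule applicable to the new head $\prcsc{X}$ is the context-sensitive rule $(\prcsc{A},\lcs{B})\rightarrow(\fplus{A},\cfc{B})$ of $P_{checkf}^{2}$. The point is that a matching $\lcs{B}$ must carry the left mark, and in the whole string the unique such symbol is $\csln{X}{j}$; every $\cfn{X}{i}$ is barred on both sides and cannot play the role of $B$. Applying this (scattered-context) rule therefore turns the head into $\fplus{X}$ and places the current-symbol mark $\wedge$ on position $j$, producing $\cfcn{X}{j}$ and leaving exactly the symbols $\cfn{X}{0}\dots\cfn{X}{j-1}$ to the left of $\wedge$.

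I would then split on $\cfn{X}{0}\dots\cfn{X}{j-1}=\varepsilon$. When it holds (that is, $j=0$), the mark $\wedge$ lands immediately after the head and sweeps rightward by the first subset of $P_{check}^{2}$, each step stripping the bars off one symbol and advancing $\wedge$; the sweep is closed by $P_{end}^{2}$, and the resulting $\fplus{X}X_1\dots X_n$ enables no further second-component rule, so the component halts on the required form. This subsequent sweep coincides with the one already analysed in Claim~\ref{claim:secondComp}, so I would simply appeal to it.

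The principal obstacle is the converse sub-case $j>0$, where I must rule out \emph{every} non-blocking terminating derivation. The key invariant, which I would isolate first, is that $\wedge$ only ever moves rightward and can never reappear at the head: no rule of $P_{2}$ creates a $\mptwo$-marked symbol, so once the head has become $\fplus{X}$ the precondition of $P_{init}^{2}$ is gone for good. Consequently the skipped symbols $\cfn{X}{0}\dots\cfn{X}{j-1}$ lie permanently to the left of $\wedge$, can never be the right component of a $P_{check}^{2}$ rule nor acquire a hat themselves, and thus survive as context-free symbols until the sweep ends. Because both components run in $t$-mode, the second component cannot halt while such symbols remain unless no other rule is enabled, at which point $(\cf{X})\rightarrow(!)$ of $P_{block}^{2}$ is forced; since no rule carries $!$ on its left-hand side, the string can never become terminal. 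I expect the delicate bookkeeping to be the exhaustive check that no alternative pairing of the scattered-context rules can dispose of the skipped symbols, which is what finally pins down the ``otherwise'' clause.
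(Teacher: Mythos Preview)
Your proposal is correct and follows exactly the line the paper intends: the paper's own proof of this claim is simply ``similar to the proof of Claim~\ref{claim:secondComp} and is left to the reader,'' and what you outline is precisely that adaptation. Your identification of the forced $P_{init}^{2}$ step, the fact that only $\csln{X}{j}$ can partner with $\prcsc{X}$ in $P_{checkf}^{2}$, and the left-of-$\wedge$ invariant for the skipped prefix are the points that distinguish this case from Claim~\ref{claim:secondComp}, and you have them right.
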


\begin{proof}
Proof of Claim \ref{secondCompCS1}  is similar to proof of Claim \ref{claim:secondComp} and is left to the reader.
\end{proof}

\begin{claim}
\label{secondCompCS2}
The second component of $\Gamma$ rewrites any string of the form $\mcf{X}\cfn{X}{0}$ $\dots \csrn{X}{j}\cfn{X}{j+1}  \dots$ $\cfn{X}{k-1}\csln{X}{k}$ $\dots$ $\cfn{X}{n}$ to a string of the form $\fplus{X}X_1\dots X_n$, where $X_i\in N\cup N_T$, for all $i: 0\leq i \leq n$ if and only if $\cfn{X}{j+1}\dots\cfn{X}{k-1} =\varepsilon$; otherwise blocking symbols are introduced.
\end{claim}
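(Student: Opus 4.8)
The plan is to mirror the argument of Claim~\ref{claim:secondComp}, tracking the motion of the single $\wedge$-mark (unique by Claim~\ref{claimmarks}) as it travels strictly left-to-right, and to isolate the one genuinely new phenomenon: what happens when this mark reaches the right context-sensitive symbol $\csrn{X}{j}$. As in the proof of Claim~\ref{claim:secondComp}, the computation must begin by applying the only applicable rule of $P_{init}^{2}$ to the leading symbol, $\mcf{X}\Rightarrow\pcfc{X}$, after which every further step is a checking step drawn from $P_{checkf}^{2}$, $P_{check}^{2}$, $P_{end}^{2}$ (or a blocking step from $P_{block}^{2}$). The first thing I would re-establish, exactly as in Claim~\ref{claim:secondComp}, is that working in $t$ mode together with the blocking rules forces the $\wedge$-mark to advance by one position at a time: any context-free symbol the mark skips over keeps both of its boundary marks, can never be revisited (the mark only moves right), and is therefore eventually rewritten to $!$ by $P_{block}^{2}$, aborting the derivation.

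Granting this forced one-step advance, I would follow the mark as it passes through the context-free prefix $\cfn{X}{0}\dots\cfn{X}{j-1}$ via the context-free checking rules, until it lands on the right context-sensitive symbol, turning it into its current form $\rcsc{X}$. Here is \emph{the crux} of the claim: the current symbol $\rcsc{X}$ carries the right mark $<$, so the only rule of the second component that can move the mark off it is $(\rcsc{A},\lcs{B})\rightarrow(A,\cfc{B})$ of $P_{check}^{2}$, and this rule fires only when the symbol receiving the mark is a left context-sensitive symbol $\lcs{B}$, that is, one carrying the matching left mark $>$. No context-free symbol, whose left mark is $|$, can receive the mark from $\rcsc{X}$.

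From this the two directions of the equivalence follow. If $\cfn{X}{j+1}\dots\cfn{X}{k-1}=\varepsilon$, then $\csln{X}{k}$ sits immediately to the right of $\csrn{X}{j}$; the marks $<$ and $>$ meet, the context-sensitive checking rule applies in a single step, both marks are consumed, and the mark continues as an ordinary context-free current symbol through the remaining suffix up to the finalizing step of $P_{end}^{2}$, yielding $\fplus{X}X_1\dots X_n$ precisely as in Claim~\ref{claim:secondComp}. If instead $\cfn{X}{j+1}\dots\cfn{X}{k-1}\neq\varepsilon$, then the symbol immediately right of $\csrn{X}{j}$ is context-free; since $\rcsc{X}$ cannot pass the mark to it, the only way to apply the context-sensitive checking rule is to match $\rcsc{X}$ with the non-adjacent left context-sensitive symbol $\csln{X}{k}$ through the scattered context, skipping the intervening context-free symbols, which then remain unchecked and, by the forced-advance argument above, guarantee that $P_{block}^{2}$ introduces $!$. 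Hence no terminal string is produced, which is the required converse.

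The step I expect to be the real obstacle is the completeness of the rule analysis in the blocking direction: I must argue that no alternative interleaving of the rules of $P_{check}^{2}$, $P_{checkf}^{2}$ and $P_{end}^{2}$, exploiting the non-adjacent matching, can shepherd the mark past $\csrn{X}{j}$ while still checking every intervening context-free symbol. I would close this with two structural observations read off the right-hand sides: the mark's position is monotone (it never moves left), and a context-free symbol loses a boundary mark only at the instant it becomes the current symbol. Together these show that any symbol once bypassed is permanently stranded with a live $|$ mark and is therefore fatal in $t$ mode. The boundary bookkeeping at the $\csrn{X}{j}$/$\csln{X}{k}$ junction, namely that exactly the $<$ and $>$ marks are consumed and the result is a legitimate context-free current symbol, is routine, and I would treat it as in Claims~\ref{claim:secondComp} and~\ref{secondCompCS1}.
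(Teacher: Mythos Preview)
Your proposal is correct and follows exactly the route the paper intends: the paper's own proof of this claim is literally ``similar to the proof of Claim~\ref{claim:secondComp} and left to the reader,'' and your write-up supplies precisely those details, correctly isolating the one new ingredient---the obligatory $<$/$>$ handshake when the $\wedge$-mark sits on $\csrn{X}{j}$ and can only be passed to a $\lcs{B}$ symbol. One small slip worth fixing: the uniqueness of the $\wedge$-mark is not what Claim~\ref{claimmarks} asserts (that claim is about the $\mpone,\mptwo,\mpthr$ marks on the leftmost symbol), although the same one-in/one-out inspection of the right-hand sides of $P_{init}^{2}\cup P_{check}^{2}\cup P_{checkf}^{2}\cup P_{end}^{2}$ establishes it immediately.
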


\begin{proof}
Proof of Claim \ref{secondCompCS2} is similar to proof of Claim \ref{claim:secondComp} and is left to the reader.
\end{proof}

Based on the previous claims, it is easy to show that each simulation of a rule of $G$ consists of a single activation of the first component followed by a single activation of the second component of $\Gamma$. If the simulated context-sensitive rule is applied in a scattered way, blocking symbols are introduced to the sentential form; otherwise the sentential form is prepared for the simulation of another rule. In the end, all nonblocking symbols are rewritten to terminals thus producing a sentence of the simulated language. Therefore, ${L}(G)={L}(\Gamma)$.

\begin{example}
Suppose CSG $G=(\{A,B,C,D,E\}, \{b, c,$ $d,$ $e\}, P, A)$ with rules $P=\{A\rightarrow BC, C\rightarrow CD, BD \rightarrow DB, CD \rightarrow ED, B\rightarrow b, C\rightarrow c, D\rightarrow d, E\rightarrow e\}$. Observe that there is no sentential form that could be generated by grammar $G$ where the rule $BD \rightarrow DB$ could be applied.

Based on the described constructions, equivalent SCGS $\Gamma$ can be created as $\Gamma = (N_{GS}, T, \fplus{A}, P_1, P_2)$.  We will now try to show, how would $\Gamma$ simulate $G$. Because the amount of rules and symbols created by the transformation algorithm is quite large, we will not list elements of these sets.

The only rule of $G$ that has starting symbol on its left hand side is $A\rightarrow BC$. Similarly, only rule applicable on $BC$ (we will ignore rules with terminals) is rule $C\rightarrow CD$ Derivation $A\Rightarrow^* BCD$ would be simulated using following sequence of derivation steps:

{\footnotesize\begin{flalign*}
\fplus{A}&\Rightarrow\mcf{B}\cf{C}&[(\fplus{A})\rightarrow (\mcf{B}\cf{C})\in P_{AtoBC}^{1}]\\
&\Rightarrow\pcfc{B}\cf{C}&[(\mcf{B})\rightarrow (\pcfc{B})\in P_{init}^{2}]\\
&\Rightarrow\fplus{B}\cfc{C}&[(\pcfc{B},\cf{C})\rightarrow (\fplus{B},\cfc{C})\in P_{checkf}^{2}]\\
&\Rightarrow\fplus{B}C&[(\fplus{B},\cfc{C})\rightarrow (\fplus{B},C)\in P_{end}^{2}]\\
\end{flalign*}}
This way, the first rule is simulated. It is important to note that since $\Gamma$ works in $t$ mode, rules from set $P_2$ are all applied together. The derivation would continue using the following rules:

{\footnotesize\begin{flalign*}
\fplus{B}C&\Rightarrow\mcf{B}\cf{C}\cf{D}&[(\fplus{B},C)\rightarrow (\mcf{B},\cf{C}\cf{D})\in P_{AtoBC}^{1}]\\
&\Rightarrow\pcfc{B}\cf{C}\cf{D}&[(\mcf{B})\rightarrow (\pcfc{B})\in P_{init}^{2}]\\
&\Rightarrow\fplus{B}\cfc{C}\cf{D}&[(\pcfc{B},\cf{C})\rightarrow (\fplus{B},\cfc{C})\in P_{checkf}^{2}]\\
&\Rightarrow\fplus{B}C\cfc{D}&[(\cfc{C},\cf{D})\rightarrow (C,\cfc{D})\in P_{check}^{2}]\\
&\Rightarrow\fplus{B}CD&[(\fplus{B},\cfc{D})\rightarrow (\fplus{B},D)\in P_{end}^{2}]\\
\end{flalign*}}%
As was mentioned before, rule $BD \rightarrow DB$ can in fact never be applied by the grammar $G$. Suppose sentential form $\fplus{B}CD$ of the $\Gamma$. Simulation of this rule would lead to the following derivation:

{\footnotesize\begin{flalign*}
\fplus{B}CD&\Rightarrow\mrcs{D}C\lcs{B}&[(\fplus{B},D)\rightarrow (\mrcs{D},\lcs{B})\in P_{ABtoCD}^{1}]\\
&\Rightarrow\mrcs{D}\cf{C}\lcs{B}&[(\mrcs{D},C)\rightarrow (\mrcs{D},\cf{C})\in P_{phase2}^{1}]\\
&\Rightarrow\prcsc{D}\cf{C}\lcs{B}&[(\mrcs{D})\rightarrow (\prcsc{D})\in P_{init}^{2}]\\
&\Rightarrow\fplus{D}\cf{C}\cfc{B}&[(\prcsc{D},\lcs{B})\rightarrow (\fplus{D},\cfc{B})\in P_{checkf}^{2}]\\
&\Rightarrow\fplus{B}\cf{C}D&[(\fplus{B},\cfc{D})\rightarrow (\fplus{B},D)\in P_{end}^{2}]\\
&\Rightarrow\fplus{B}!D&[(B)\rightarrow (!)\in P_{block}^{2}]
\end{flalign*}}%
Again, each component of $\Gamma$ works in $t$ mode. This ensures that any symbols skipped during the checking phase, will be replaced by blocking symbols (!) before the second component of $\Gamma$ deactivates.

On the other hand, rule $CD \rightarrow ED$ can be applied. The simulation of this rule works as follows:

{\footnotesize\begin{flalign*}
\fplus{B}CD&\Rightarrow\mcf{B}\rcs{E}\lcs{D}&[(\fplus{B},C,D)\rightarrow (\mcf{B},\rcs{E},\lcs{D})\in P_{ABtoCD}^{1}]\\
&\Rightarrow\pcfc{B}\rcs{E}\lcs{D}&[(\mcf{B})\rightarrow (\pcfc{B})\in P_{init}^{2}]\\
&\Rightarrow\fplus{B}\rcsc{E}\lcs{D}&[(\pcfc{B},\rcs{E})\rightarrow (\fplus{B},\rcsc{E})\in P_{checkf}^{2}]\\
&\Rightarrow\fplus{B}E\cfc{D}&[(\rcsc{E},\lcs{D})\rightarrow ({E},\cfc{D})\in P_{check}^{2}]\\
&\Rightarrow\fplus{B}{E}D&[(\fplus{B},\cfc{D})\rightarrow (\fplus{B},D)\in P_{end}^{2}]\\
\end{flalign*}}%
\end{example}

\begin{theorem}
$\mathcal{L}(SCGS)=\mathcal{L}(CS)$
\end{theorem}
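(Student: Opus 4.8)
The plan is to prove the equality of the two language families by establishing mutual inclusion, since two sets coincide exactly when each is contained in the other. Both required inclusions have already been isolated as separate lemmas, so the proof of the theorem reduces to assembling them and appealing to the antisymmetry of set inclusion.

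First I would invoke Lemma \ref{lemma:scgssubsetcs} to obtain $\mathcal{L}(SCGS) \subseteq \mathcal{L}(CS)$. This direction rests on two known simulation results: every scattered context grammar can be simulated by a context-sensitive grammar, and every CD grammar system with context-sensitive components working in $t$ mode can be converted into an equivalent context-sensitive grammar. Since every SCGS is in particular such a system (its propagating scattered context rules being a special case of context-sensitive rules), composing these two simulations yields the inclusion.

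Next I would invoke Lemma \ref{lemma:cssubsetscgs} for the reverse inclusion $\mathcal{L}(CS) \subseteq \mathcal{L}(SCGS)$. This is the substantive direction, carried out by the explicit construction of the two-component system $\Gamma$ from a context-sensitive grammar $G$ in Kuroda normal form, together with the verification that $L(G)=L(\Gamma)$ supplied by the accompanying claims. Combining the two inclusions then gives $\mathcal{L}(SCGS)=\mathcal{L}(CS)$, completing the argument.

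At the level of the theorem itself there is no remaining obstacle: the entire difficulty is absorbed into Lemma \ref{lemma:cssubsetscgs}, specifically into guaranteeing that the scattered nature of the rules cannot be exploited to apply a simulated context-sensitive rule to non-adjacent symbols. This is exactly what the blocking rules $P_{block}^{2}$ together with the $t$-mode termination discipline enforce, so that any skipped symbol triggers the introduction of the blocking symbol $!$ and kills the derivation before a terminal string can be produced.
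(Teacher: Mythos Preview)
Your proof is correct and mirrors the paper's exactly: the paper's own proof of this theorem is the one-line observation that the equality follows from Lemmas~\ref{lemma:scgssubsetcs} and~\ref{lemma:cssubsetscgs}. One small inaccuracy in your commentary on Lemma~\ref{lemma:scgssubsetcs}: propagating scattered context rules are \emph{not} a special case of context-sensitive rules; the correct chaining is that each PSCG component is first simulated by a context-sensitive grammar, and only then does the result about CDGS with context-sensitive components in $t$ mode apply.
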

\begin{proof}
This is implied by Lemmas \ref{lemma:scgssubsetcs} and \ref{lemma:cssubsetscgs}.
\end{proof}

\begin{theorem}
Any context-sensitive language can be generated by $SCGS$, where each scattered context rule has at most two components.
\end{theorem}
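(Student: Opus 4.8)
The plan is to revisit the construction used in the proof of Lemma~\ref{lemma:cssubsetscgs} and to observe that it is already almost of the desired form. A quick inspection of $P_1$ and $P_2$ shows that every rule produced there has at most two components, with the single exception of the rules of $P_{ABtoCD}^1$ of the form $(\fplus{X},A,B)\rightarrow(\mcf{X},\rcs{C},\lcs{D})$, which have three. (The companion rule $(\fplus{A},B)\rightarrow(\mrcs{C},\lcs{D})$, used when the first symbol is itself the left member of the rewritten pair, already has only two components, and the first symbol can never be the right member.) Hence it suffices to exhibit a modified system $\Gamma'$ that replaces each three-component rule by two-component rules while generating the same language as $\Gamma$.

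First I would split the atomic simulation of a context-sensitive rule $p=AB\rightarrow CD$ into two consecutive derivation steps of the first component. For every such $p$ and every $X\in N_T\cup N$, I introduce a fresh first-symbol marker $\langle X,p\rangle$, carrying none of the marks $\mpone,\mptwo,\mpthr$, and I replace the rule $(\fplus{X},A,B)\rightarrow(\mcf{X},\rcs{C},\lcs{D})$ by the pair of two-component rules $(\fplus{X},A)\rightarrow(\langle X,p\rangle,\rcs{C})$ and $(\langle X,p\rangle,B)\rightarrow(\mcf{X},\lcs{D})$. Both new rules are propagating, each carries exactly one marked first symbol on either side (so the analogue of Claim~\ref{claimmarks} is preserved), and everything else in $P_1$, $P_2$ and $N_{GS}$ stays as before.

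Next I would verify that this splitting does not alter the simulated language. The key points are: the second rule can fire only after the first, since $\langle X,p\rangle$ is created solely by the first rule and consumed solely by the second; the intermediate marker is deliberately \emph{not} a $\mptwo$-symbol, so the phase-two rules of $P_{phase2}^1$ stay dormant between the two steps and cannot prematurely collapse the remaining symbols to their context-free variants; and tagging $\langle X,p\rangle$ with the whole rule $p$ prevents an $\rcs{C}$ produced for one rule from being paired with an $\lcs{D'}$ produced for another. Because both components still work in $t$ mode, once the first rule has been applied the first component is forced to continue: if a suitable $B$ is present it must apply the second rule (every first-component rule keys on the unique marked first symbol, and no rule other than the second matches $\langle X,p\rangle$), after which the first symbol becomes $\mcf{X}$ and phase two proceeds exactly as in the original construction; if no $B$ is present the component halts in a configuration whose first symbol $\langle X,p\rangle$ is handled by no rule of the second component, so the derivation dies without yielding a sentence. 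Thus exactly one context-sensitive rule is still simulated per activation of the first component.

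The remaining work is routine and mirrors the original argument: the correct left-to-right order and the adjacency of $\rcs{C}$ and $\lcs{D}$ are not enforced by the first component at all but, exactly as before, by the checking rules of the second component, so Claims~\ref{secondCompCS1} and~\ref{secondCompCS2} apply verbatim to the forms produced by the new rules. I expect the main obstacle to be precisely the bookkeeping around the intermediate configuration --- ruling out the three ways it could misbehave (triggering phase two too early, being completed by a mismatched rule, or producing a spurious sentence when it cannot be completed) --- rather than any genuinely new idea. Once these checks are in place, $L(\Gamma')=L(G)$ follows as in Lemma~\ref{lemma:cssubsetscgs}, and since $\Gamma'$ uses only scattered context rules with at most two components, the theorem follows.
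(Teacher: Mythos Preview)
There is a genuine gap. Your second rule $(\langle X,p\rangle,B)\rightarrow(\mcf{X},\lcs{D})$ pairs $B$ with the leftmost symbol of the sentential form, not with the position where $A$ stood; it therefore only forces $B$ to lie to the right of position~$1$, not to the right of the chosen occurrence of $A$. The original three-component rule $(\fplus{X},A,B)\rightarrow(\mcf{X},\rcs{C},\lcs{D})$ \emph{does} guarantee that $A$ precedes $B$, by the very semantics of scattered context rewriting, so your assertion that ``the correct left-to-right order \dots\ [is] not enforced by the first component at all'' is wrong for the original system, and Claims~\ref{secondCompCS1} and~\ref{secondCompCS2} only treat forms in which the symbol $\rcs{C}$ lies to the left of $\lcs{D}$.

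Concretely, take a form $\fplus{X}EBAF$ with $AB\to CD$ and $EF\to GH$ in $P$. Your two steps may select the $A$ first and then the $B$ lying to its \emph{left}, giving after $P_{phase2}^{1}$ the string $\mcf{X}\cf{E}\lcs{D}\rcs{C}\cf{F}$. The second component can move the cursor from $\cf{E}$ straight to $\cf{F}$, skipping the stranded pair $\lcs{D}\rcs{C}$, and terminate with $\fplus{X}E\lcs{D}\rcs{C}F$; no blocking symbol is introduced, since $P_{block}^{2}$ acts only on $N_{CF}$-symbols. In the next round the simulation of $EF\to GH$ produces $\mcf{X}\rcs{G}\lcs{D}\rcs{C}\lcs{H}$, which the checking sweep accepts and reduces to $\fplus{X}GDCH$. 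Hence $\Gamma'$ derives a string that $G$ cannot. The paper's decomposition avoids exactly this pitfall: its second auxiliary step pairs $B$ with an auxiliary marker left \emph{at $A$'s former position} rather than with the first symbol, so $B$ is forced to the right of $A$; a third step then turns the first symbol into its $\mptwo$-form.
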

\begin{proofbi}
Obviously, only the first subset of $P_{ABtoCD}^{1}$  has more than two components in its rules. Rules of this subset can be simulated by introduction of some auxiliary rules and symbols. Suppose  rule $(\fplus{X},A,B)\rightarrow (\mcf{X},\rcs{A},\lcs{B})$ and sentential form $\fplus{X}AB$. This rule can be simulated by using those auxiliary rules in a following way:$$\fplus{X}AB\Rightarrow\aux{X}\aorcs{C}B\Rightarrow \aux{X}\rcs{C}\atlcs{D}\Rightarrow \mcf{X}\rcs{C}\lcs{D},$$ where always pairs of symbols are rewritten during each derivation step. Formal proof is left to the reader.
\end{proofbi}


The modified version of $\mathcal{L}(PSCG)=\mathcal{L}(CS)$ problem was discussed in this paper. This modification deals with combination of CD grammar systems with propagating scattered context components and compares their generative power with context-sensitive grammars. The algorithm that constructs grammar system that simulates given context-sensitive grammar has been described. Based on this algorithm, it is shown that those two models have the same generative power. Furthermore it is shown that this property holds even for the most simple variant of these grammar systems---that is, those using only two components, where each scattered context rule is of degree of at most two.

\section*{Acknowledgment}

This work was supported by The Ministry of Education, Youth and Sports of the Czech Republic from the National Programme of Sustainability (NPU II); project IT4Innova\-tions excellence in science - LQ1602; the TA\v{C}R grant TE01020415; and the BUT grant FIT-S-17-3964. 

\nocite{*}
\bibliographystyle{eptcs}
\bibliography{generic}

\begin{thebibliography}{1}
\providecommand{\bibitemdeclare}[2]{}
\providecommand{\surnamestart}{}
\providecommand{\surnameend}{}
\providecommand{\urlprefix}{Available at }
\providecommand{\url}[1]{\texttt{#1}}
\providecommand{\href}[2]{\texttt{#2}}
\providecommand{\urlalt}[2]{\href{#1}{#2}}
\providecommand{\doi}[1]{doi:\urlalt{http://dx.doi.org/#1}{#1}}
\providecommand{\bibinfo}[2]{#2}

\bibitemdeclare{article}{csuhaj1990cooperating}
\bibitem{csuhaj1990cooperating}
\bibinfo{author}{Erzs{\'e}bet \surnamestart Csuhaj-Varj{\'u}\surnameend} \&
  \bibinfo{author}{J{\"u}rgen \surnamestart Dassow\surnameend}
  (\bibinfo{year}{1990}): \emph{\bibinfo{title}{On cooperating/distributed
  grammar systems}}.
\newblock {\sl \bibinfo{journal}{Journal of Information Processing and
  Cybernetics}} \bibinfo{volume}{26}(\bibinfo{number}{1-2}), pp.
  \bibinfo{pages}{49--63}.

\bibitemdeclare{book}{csuhaj1994grammar}
\bibitem{csuhaj1994grammar}
\bibinfo{author}{Erzs\'{e}bet \surnamestart Csuhaj-Varj\'{u}\surnameend},
  \bibinfo{author}{Jozef \surnamestart Kelemen\surnameend},
  \bibinfo{author}{Gheorghe \surnamestart P\u{a}un\surnameend} \&
  \bibinfo{author}{J\"urgen \surnamestart Dassow\surnameend}
  (\bibinfo{year}{1994}): \emph{\bibinfo{title}{Grammar Systems: A Grammatical
  Approach to Distribution and Cooperation}}.
\newblock \bibinfo{publisher}{Gordon and Breach Science Publishers}.

\bibitemdeclare{book}{dassow1989regulated}
\bibitem{dassow1989regulated}
\bibinfo{author}{J.~\surnamestart Dassow\surnameend} \&
  \bibinfo{author}{G.~\surnamestart P{\u{a}}un\surnameend}
  (\bibinfo{year}{1989}): \emph{\bibinfo{title}{Regulated rewriting in formal
  language theory}}.
\newblock \bibinfo{series}{EATCS monographs on theoretical computer science},
  \bibinfo{publisher}{Springer}, \doi{10.1007/978-3-642-74932-2}.

\bibitemdeclare{article}{greibach}
\bibitem{greibach}
\bibinfo{author}{Sheila \surnamestart Greibach\surnameend} \&
  \bibinfo{author}{John \surnamestart Hopcroft\surnameend}
  (\bibinfo{year}{1969}): \emph{\bibinfo{title}{Scattered context grammars}}.
\newblock {\sl \bibinfo{journal}{Journal of Computer and System Sciences}}
  \bibinfo{volume}{3}(\bibinfo{number}{3}), pp. \bibinfo{pages}{233 -- 247},
  \doi{10.1016/S0022-0000(69)80015-2}.
\newblock
  \urlprefix\url{http://www.sciencedirect.com/science/article/pii/S0022000069800152}.

\bibitemdeclare{article}{KURODA1964207}
\bibitem{KURODA1964207}
\bibinfo{author}{S.-Y. \surnamestart Kuroda\surnameend} (\bibinfo{year}{1964}):
  \emph{\bibinfo{title}{Classes of languages and linear-bounded automata}}.
\newblock {\sl \bibinfo{journal}{Information and Control}}
  \bibinfo{volume}{7}(\bibinfo{number}{2}), pp. \bibinfo{pages}{207 -- 223},
\doi{10.1016/S0019-9958(64)90120-2}.

\bibitemdeclare{book}{meduna2000}
\bibitem{meduna2000}
\bibinfo{author}{A.~\surnamestart Meduna\surnameend} (\bibinfo{year}{2000}):
  \emph{\bibinfo{title}{Automata and Languages: Theory and Applications}}.
\newblock \bibinfo{publisher}{Springer London}, \doi{10.1007/978-1-4471-0501-5}.

\bibitemdeclare{book}{meduna2014}
\bibitem{meduna2014}
\bibinfo{author}{A.~\surnamestart Meduna\surnameend} \&
  \bibinfo{author}{P.~\surnamestart Zemek\surnameend} (\bibinfo{year}{2014}):
  \emph{\bibinfo{title}{Regulated Grammars and Automata}}.
\newblock \bibinfo{publisher}{Springer New York}, \doi{10.1007/978-1-4939-0369-6}.

\bibitemdeclare{book}{handbookVol2}
\bibitem{handbookVol2}
\bibinfo{author}{G.~\surnamestart Rozenberg\surnameend} \&
  \bibinfo{author}{A.~\surnamestart Salomaa\surnameend} (\bibinfo{year}{1997}):
  \emph{\bibinfo{title}{Handbook of Formal Languages: Volume 2. Linear
  Modeling: Background and Application}}.
\newblock \bibinfo{series}{Handbook of Formal Languages},
  \bibinfo{publisher}{Springer Berlin Heidelberg}, 
\doi{10.1007/978-3-662-07675-0}.

\bibitemdeclare{book}{handbookVol1}
\bibitem{handbookVol1}
\bibinfo{author}{G.~\surnamestart Rozenberg\surnameend} \&
  \bibinfo{author}{A.~\surnamestart Salomaa\surnameend} (\bibinfo{year}{1997}):
  \emph{\bibinfo{title}{Handbook of Formal Languages: Volume 1 Word, Language,
  Grammar}}.
\newblock \bibinfo{publisher}{Springer Berlin Heidelberg},
\doi{10.1007/978-3-642-59136-5}.

\end{thebibliography}
\end{document}